\newtheorem{thm}{Theorem}[section]
\newtheorem{lem}{Lemma}[section]
\newcommand{\x}{\mathbf{x}}
\newcommand{\f}{\mathbf{f}}
\newcommand{\g}{\mathbf{g}}
\newcommand{\h}{\mathbf{h}}
\newcommand{\X}{\mathbf{X}}
\newcommand{\A}{\mathbf{a}}
\newcommand{\U}{\mathbf{u}}
\newcommand{\eps}{\epsilon}
\newcommand{\eq}{\begin{equation}}
\newcommand{\eqs}{\begin{equation*}}
\newcommand{\eqend}{\end{equation}}
\newcommand{\eqsend}{\end{equation*}}
\begin{document}

\title{Recovery of Sparse 1-D Signals from the Magnitudes of their Fourier Transform}
\author{\begin{tabular}[t]{c@{\extracolsep{5em}}c@{\extracolsep{5em}}c}
Kishore Jaganathan & Samet Oymak & Babak Hassibi\end{tabular}\\
 \\
        Department of Electrical Engineering, Caltech \\
       Pasadena, CA~~91125
}
\date{}
\maketitle

\begin{abstract} 
The problem of signal recovery from the autocorrelation, or equivalently, the magnitudes of the Fourier transform, is of paramount importance in various fields of engineering. In this work,  for one-dimensional signals, we give conditions, which when satisfied, allow unique recovery from the autocorrelation with very high probability. In particular, for sparse signals, we develop two non-iterative recovery algorithms. One of them is based on combinatorial analysis, which we prove can recover signals upto sparsity $o(n^{1/3})$ with very high probability, and the other is developed using a convex optimization based framework, which numerical simulations suggest can recover signals upto sparsity $o(n^{1/2})$ with very high probability.
\end{abstract}

\let\thefootnote\relax\footnotetext{This work was supported in part by the National Science Foundation under grants CCF-0729203, CNS-0932428 and CCF-1018927, by the Office of Naval Research under the MURI grant N00014-08-1-0747, and by Caltech's Lee Center for Advanced Networking.}

\section{Introduction} 

Signal extraction from the autocorrelation, or equivalently, from  the magnitude of the Fourier Transform is known as phase retrieval. This problem fundamentally arises in many practical systems such as X-ray crystallography \cite{millane}, astronomical imaging \cite{dainty}, channel estimation,  speech recognition \cite{rabiner} etc, and has attracted a considerable amount of attention from researchers over the last few decades \cite{walther}. Various algorithms have been proposed to retrieve phase information \cite{gerchberg,gonzalves} and a comprehensive survey of them can be found in \cite{fienup,kailath}.

For one-dimensional signals, since the mapping from signals to autocorrelation is not one-to-one, unique recovery is not possible in general. For any given Fourier transform magnitude,  every possible phase corresponds to a different signal. Hence, additional prior information on the signal is required to limit the number of valid phase combinations. \cite{candes} uses multiple structured illuminations, in which several patterns using different masks are collected to guarantee uniqueness. 

We assume that the signal is sparse, i.e., the number of non-zero entries in the signal is much less compared to the length of the signal. This constraint greatly limits the number of possible phase combinations, and research has been done recently to exploit this feature \cite{vetterli,kishore}. In many applications of phase retrieval, the signals encountered are naturally sparse. For example, astronomical imaging deals with the locations of the stars in the sky, electron microscopy deals with the density of electrons, and so on. 

In this work, we prove that signals can be recovered from their autocorrelation with arbitrarily high probability under certain conditions. We prove this using dimension counting, based on the ideas used in \cite{fannjiang,hayes} for multidimensional signals. We also propose two non-iterative recovery algorithms to extract sparse signals from their autocorrelation. Note that the phase recovery problem is inherently non-convex, and relaxations similar to the ones used in \cite{yonina,kishore, candes2} are used to develop a convex-optimization based framework.

The paper is organized as follows. In Section 2, we discuss some  properties of autocorrelation and spectral factorization which we use for signal extraction. In Section 3, we prove that signals can be recovered from their autocorrelation with very high probability under certain conditions. Non-iterative recovery algorithms are proposed for extraction of the signal from their autocorrelation in Section 4. Section 5 presents the simulation results and concludes the paper.

\section{Theory}

Let $\x=(x_0,x_1,....x_{n-1})$ be a real-valued signal of length $n$. Its autocorrelation, denoted by $\A=(a_0,a_1,....a_{n-1})$, is defined as

\eq
\label{autocorrdef}
a_i \overset{def}{=} \sum_jx_jx_{j+i} = (\x \star \tilde{\x})_i
\eqend
where $\tilde{\x}$ is the time-reversed version of $\x$. Note that cyclic indexing scheme is used in this definition. Rewriting (\ref{autocorrdef}) in the $z$-domain, we get
\eq
\label{autocorrz}
A(z)=X(z)X(z^{-1})
\eqend
where $A(z)$ and $X(z)$ are the $z$-transforms of $\A$ and $\x$ respectively. Since $\x$ is real valued, $X(z)$ is a polynomial in $z$ with real coefficients and hence its zeros occur in conjugate pairs. Also, since $A(z)=A(z^{-1})$, if $z_0$ is a zero of $A(z)$, then $z_0^{-1}$ is also a zero. Hence, the zeros of $A(z)$ appear in quadruples of the form $(z_0,z_0^\star, z_0^{-1}, z_0^{-\star})$. 

The extraction of  $\x$ from $\A$, or equivalently $X(z)$ from $A(z)$, is known as spectral factorization and deals with the distribution of these quadruples between $X(z)$ and $X(z^{-1})$. For every quadruple $(z_0,z_0^\star, z_0^{-1}, z_0^{-\star})$, we can either assign $(z_0,z_0^\star)$ to $X(z)$ and $(z_0^{-1}$,$z_0^{-\star})$ to $X(z^{-1})$, or assign $(z_0^{-1}$,$z_0^{-\star})$  to $X(z)$ and $(z_0,z_0^\star)$ to $X(z^{-1})$. The total number of different valid factorizations hence is exponential in the number of such quadruples. 

\begin{lem}
\label{autolem}
If two distinct finite-length real-valued signals $\f_1$ and $\f_2$ have the same autocorrelation, then there exists finite-length real-valued signals $\g$ and $\h$ such that 
\eq
\f_1=\g \star \h \ \ \ \ \ \ \f_2=\g \star \tilde{\h} 
\eqend
where $\tilde{\h}$ is the time-reversed version of $\h$.
\end{lem}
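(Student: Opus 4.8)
The plan is to pass to the $z$-domain and exploit unique factorization of polynomials. By (\ref{autocorrz}), the hypothesis that $\f_1$ and $\f_2$ share an autocorrelation $\A$ is precisely the identity
\eq
F_1(z)\,F_1(z^{-1}) \;=\; A(z) \;=\; F_2(z)\,F_2(z^{-1})
\eqend
between Laurent polynomials with real coefficients, where $F_1,F_2$ denote the $z$-transforms of $\f_1,\f_2$; both are nonzero, since the common autocorrelation has $a_0=\|\f_1\|^2=\|\f_2\|^2$, which would vanish only if the two signals coincided.

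First I would set $G(z)=\gcd\!\big(F_1(z),F_2(z)\big)$ in $\mathbf{R}[z]$ and write $F_1=G\cdot H_1$ and $F_2=G\cdot H_2$ with $H_1,H_2$ coprime real polynomials. Substituting into the identity above and cancelling the common nonzero factor $G(z)G(z^{-1})$ leaves
\eq
H_1(z)\,H_1(z^{-1}) \;=\; H_2(z)\,H_2(z^{-1}).
\eqend
Any monomial factor $z^{a}$ dividing $H_1$ or $H_2$ cancels out of this equation on its own, so one may as well assume $H_1(0),H_2(0)\neq0$ and restore those pure shifts at the end.

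The crux is then to compare zeros (in $\mathbf{C}\setminus\{0\}$, counted with multiplicity). If $Z_1$ denotes the root multiset of $H_1$, a root $z_0$ of multiplicity $m$ contributes $z_0$ and $z_0^{-1}$ each with multiplicity $m$ to the left-hand side, so that side has zero multiset $Z_1\uplus Z_1^{-1}$; likewise the right-hand side gives $Z_2\uplus Z_2^{-1}$, whence $Z_1\uplus Z_1^{-1}=Z_2\uplus Z_2^{-1}$. Coprimality of $H_1$ and $H_2$ now does the work: since they share no root, any $z_0\in Z_1$ must be matched by $z_0^{-1}\in Z_2$, and pushing this through with multiplicities (invoking coprimality at $z_0$ and again at $z_0^{-1}$) forces $Z_2=Z_1^{-1}$ exactly. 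Equivalently $H_2(z)=\lambda\,z^{\deg H_1}H_1(z^{-1})$ for a real constant $\lambda$ — that is, up to scaling $H_2$ is the coefficient-reversal $\widetilde{H_1}$ of $H_1$ — and feeding this back into the displayed identity yields $\lambda^2=1$.

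Finally I would translate back: taking $\g$ to be the signal with $z$-transform $G(z)$ and $\h$ the one with $z$-transform $H_1(z)$ (folding $\lambda$ and the monomial shifts into $\g$ or $\h$), polynomial multiplication becomes convolution, so $\f_1=\g\star\h$, while $F_2=\lambda\,G(z)\widetilde{H_1}(z)$ gives $\f_2=\g\star\tilde\h$ because reversing the coefficient list of $H_1$ is exactly time-reversal of $\h$. I expect the delicate point to be the multiplicity bookkeeping in the previous step — in particular the self-reciprocal roots $z_0=z_0^{-1}$ that can sit on the real axis or the unit circle, together with the degenerate case $\deg H_1=0$ (equivalently $F_2=\pm F_1$), where one has to decide how the overall sign $\lambda=\pm1$ (and any residual shift) gets attributed to $\g$ and $\h$; the rest is routine polynomial manipulation.
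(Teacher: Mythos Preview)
Your argument is correct and follows essentially the same route as the paper: pass to the $z$-domain, use $F_1(z)F_1(z^{-1})=F_2(z)F_2(z^{-1})$, split off the factors common to $F_1$ and $F_2$ as $G$, and show that what remains of $F_2$ is the coefficient-reversal of what remains of $F_1$. The paper does the last step by directly partitioning the conjugate zero pairs of $A(z)$ according to whether they land in $F_1$ or $F_1(z^{-1})$ and in $F_2$ or $F_2(z^{-1})$, whereas you phrase it via $G=\gcd(F_1,F_2)$ and a coprimality-plus-multiplicity argument on $H_1,H_2$; your formulation is a bit more careful about multiplicities and the self-reciprocal edge cases, but the underlying decomposition is the same.
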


\begin{proof}

Let $F_1(z)$, $F_2(z)$, $G(z)$ and $H(z)$ be the $z$-transforms of the signals $\f_1$, $\f_2$, $\g$ and $\h$ respectively. Since $\f_1$ and $\f_2$ have the same autocorrelation, (\ref{autocorrz}) gives us

\eq
A(z)=F_1(z)F_1(z^{-1})=F_2(z)F_2(z^{-1})
\eqend
where $A(z)$ is the $z$-transform of the autocorrelation of $\f_1$ and $\f_2$. For every quadruple ($z_0$,$z_0^\star$,$z_0^{-1}$,$z_0^{-\star}$) which are zeros of $A(z)$,  $(z_0,z_0^\star)$ has to be assigned to  $F_1(z)$ or $F_1(z^{-1})$, and $F_2(z)$ or $F_2(z^{-1})$. Let $P_{1}(z)$, $P_{2}(z)$ and $P_3(z)$ be the polynomials constructed from such conjugate pairs of zeros which are assigned to $(F_1(z),F_2(z))$ and  $(F_1(z),F_2(z^{-1}))$ and $(F_1(z^{-1}),F_2(z))$ respectively. Note that $P_2(z)=P_3(z^{-1})$. We have

\eq
F_1(z)=P_1(z)P_2(z) 
\eqend
\eq
 F_2(z)=P_1(z)P_3(z)=P_1(z)P_2(z^{-1})
\eqend
and hence $F_1(z)$ and $F_2(z)$ can be written as
\eq
F_1(z)=G(z)H(z) \ \ \ \ \ \ F_2(z)=G(z)H(z^{-1})
\eqend
where $G(z)=P_1(z)$ and $H(z)=P_2(z)$, or equivalently
\eq
\f_1=\g \star \h \ \ \ \ \ \ \f_2=\g \star \tilde{\h} 
\eqend
in the time domain as the $z$-transform of $\tilde{\h}$ is $H(z^{-1})$.
\end{proof}

\section{Unique Recovery}

In this section, we establish the fact that within the class of signals with non-uniform support (defined later), there is a one-to-one mapping between signals and their autocorrelation almost surely. 

\begin{lem}
\label{hayes}
If $f: \mathcal{A} \rightarrow \mathcal{B}$ is a map from $\mathcal{A}$ to $\mathcal{B}$, where $\mathcal{A}$ is a manifold of dimension $d_a$ and $\mathcal{B}$ is a manifold of dimension $d_b$, then the image of $f$ is  measure zero in $\mathcal{B}$ if $d_a<d_b$.
\end{lem}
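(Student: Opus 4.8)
The plan is to reduce the assertion to the Euclidean statement that a sufficiently regular map $F\colon W\to\mathbb{R}^{d_b}$ on an open set $W\subseteq\mathbb{R}^{d_a}$ has image of Lebesgue measure zero whenever $d_a<d_b$, and then to globalize this over a countable atlas. Here one must implicitly take $f$ to be smooth (or at least locally Lipschitz), which is harmless for our purposes since the maps arising from autocorrelation are polynomial; without any regularity the claim is false, by Peano-type space-filling constructions. Equivalently, the lemma is an immediate consequence of Sard's theorem: when $d_a<d_b$ the differential $df_p$ cannot be surjective at any $p$, so every point of $\mathcal{A}$ is a critical point, the whole image $f(\mathcal{A})$ is the set of critical values, and Sard's theorem declares it measure zero. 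I will nonetheless give the self-contained volume argument below.

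First I would fix the global-to-local reduction. Since manifolds are second countable, $\mathcal{A}$ admits a countable atlas $\{(U_\alpha,\varphi_\alpha)\}$ which we may refine so that each $\overline{U_\alpha}$ is compact and $f(U_\alpha)$ lies in the domain $V_\alpha$ of a single chart $\psi_\alpha$ of $\mathcal{B}$. A subset of $\mathcal{B}$ is called measure zero if its image under every chart is Lebesgue-null in $\mathbb{R}^{d_b}$; this notion is well defined because chart transition maps are diffeomorphisms and hence carry null sets to null sets, and because a countable union of null sets is null. It therefore suffices to show that $\psi_\alpha(f(U_\alpha))$ is Lebesgue-null for each $\alpha$, i.e. that the coordinate representative $F_\alpha:=\psi_\alpha\circ f\circ\varphi_\alpha^{-1}\colon W_\alpha\to\mathbb{R}^{d_b}$, with $W_\alpha\subseteq\mathbb{R}^{d_a}$ open and bounded, has null image.

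Second I would establish the Euclidean fact by a covering estimate. On a neighborhood of $\overline{W_\alpha}$ the map $F_\alpha$ is $C^1$, hence Lipschitz with some constant $L$. Cover $\overline{W_\alpha}$ by $N^{d_a}$ cubes of side $\delta$ (so $N\sim\delta^{-1}$); the $F_\alpha$-image of each such cube has diameter at most $L\sqrt{d_a}\,\delta$ and thus sits inside a cube of side $O(\delta)$ in $\mathbb{R}^{d_b}$, of volume $O(\delta^{d_b})$. Summing over the $O(\delta^{-d_a})$ cubes that meet $W_\alpha$, the image $F_\alpha(W_\alpha)$ is covered by a family of total volume $O(\delta^{\,d_b-d_a})$, which tends to $0$ as $\delta\to0$ because $d_b>d_a$. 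Hence $F_\alpha(W_\alpha)$ has outer measure zero, and summing over $\alpha$ finishes the proof. The only point requiring real care is the bookkeeping of the atlases and the chart-independence of ``measure zero'' on $\mathcal{B}$; the analytic content is just the elementary volume count, and no obstacle beyond granting the regularity of $f$ remains.
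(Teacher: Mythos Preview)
Your argument is correct. The paper does not actually supply a proof of this lemma: it is stated as a standard fact (the label \texttt{hayes} and the surrounding context indicate it is being quoted from the literature on reducible polynomials and dimension counting), and the paper immediately moves on to use it. So there is nothing to compare against beyond noting that the paper treats it as known.

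Your write-up goes further than the paper by giving two independent justifications: the one-line reduction to Sard's theorem (every point is critical when $d_a<d_b$) and the elementary Lipschitz covering estimate. Both are standard and correct. You are also right to flag the implicit regularity hypothesis; the lemma as literally stated in the paper (``a map'') is false without at least local Lipschitz continuity, and the paper is silent on this point because the only maps it ever applies the lemma to are polynomial. The bookkeeping in your second-countable/compact-exhaustion reduction is sound, and the chart-independence of ``measure zero'' is handled correctly. No gaps.
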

Note that any signal of length $n$ can be represented as a vector in $\mathcal{R}^n$. Let $\f$ be a finite-length real-valued signal of length $n$. Let $I$ represent its support,  defined as the set of locations where the $\f$ can have non-zero entries. We say that a signal $\f$ has uniform support if the indices of the elements belonging to the support are periodic, i.e., in an arithmetic progression.  The size of the set $I$ denotes the sparsity of $\f$. Let $\mathcal{F}_k$ denote the set of signals with sparsity $k$. Observe that $\mathcal{F}_k$ is a manifold of dimension $k$.

\begin{lem}
Suppose $\g$ and $\h$ are finite-length real-valued signals with support set $I_g$ and $I_h$ of sparsity $k_g$ and $k_h$ respectively. If $\mathcal{F}_{gh}$ denotes  the set of signals $\g*\h$, and $I_{gh}$ its support. Then
\begin{enumerate}[(i)]
\item The set $\mathcal{F}_{gh}$ is a manifold of dimension $k_g+k_h-1$.
\item $I_{gh}$ has sparsity $k_{gh} \geq k_g+k_h-1$, with equality iff $\g$ and $\h$ have uniform support.
\item If $\f=\g*\h$, where $I$, the support of $\f$, is a subset of $I_{gh}$ with sparsity $k$. The set of such $\f$ is a manifold of dimension $k_g+k_h-1-\gamma$, where $\gamma = k_{gh}-k$.
\end{enumerate}
\label{convprop}
\end{lem}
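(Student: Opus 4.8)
The plan is to parametrise everything by the non-zero coefficients of $\g$ and $\h$ and reduce all three claims to a single dimension count for a polynomial map, outsourcing the combinatorial part of (ii) to a classical sumset inequality.

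\textbf{Part (i).} Identify a signal supported in $I_g$ with the point of $\mathbb{R}^{k_g}$ recording its values on $I_g$, similarly for $I_h$, and consider the convolution map $\mu:\mathbb{R}^{k_g}\times\mathbb{R}^{k_h}\to\mathbb{R}^{k_{gh}}$, which in the $z$-domain is polynomial multiplication $G(z)\mapsto G(z)H(z)$. Its image is $\mathcal{F}_{gh}$, so the dimension of $\mathcal{F}_{gh}$ equals the generic rank of the differential $d\mu(\delta\g,\delta\h)=\delta\g\star\h+\g\star\delta\h$. A tangent vector lies in the kernel iff $\delta G(z)H(z)=-G(z)\delta H(z)$; for generic coefficients $\gcd(G,H)$ is a monomial, so $G$ divides $\delta G$, and since $\delta G$ is supported in the fixed finite set $I_g$ the quotient must be a constant, forcing $\delta G=cG$, $\delta H=-cH$. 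Hence the kernel is exactly the one-dimensional scaling direction $(\g,\h)\mapsto(c\g,c^{-1}\h)$ and $\dim\mathcal{F}_{gh}=k_g+k_h-1$ (restricting, if one wants a genuine manifold, to the open dense locus of maximal rank).

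\textbf{Part (ii).} For $m$ in the sumset $I_g+I_h$ the coordinate $(\g\star\h)_m=\sum_{i+j=m}g_ih_j$ is a non-zero polynomial in the coefficients, hence non-zero for almost every choice, so $I_{gh}=I_g+I_h$ generically and $k_{gh}=|I_g+I_h|$. The bound $|I_g+I_h|\ge|I_g|+|I_h|-1$ over $\mathbb{Z}$ is elementary: ordering both sets increasingly gives the staircase of $k_g+k_h-1$ distinct sums $a_1+b_1<\dots<a_1+b_{k_h}<a_2+b_{k_h}<\dots<a_{k_g}+b_{k_h}$. The equality case is the classical fact that $|A+B|=|A|+|B|-1$ in $\mathbb{Z}$ forces $A$ and $B$ to be arithmetic progressions sharing a common difference $d$, which is precisely the uniform-support hypothesis (one should record the mild sharpening that the same $d$ is needed for both, not merely that each of $\g,\h$ is separately uniform).

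\textbf{Part (iii).} The $\f$ in question are the images under $\mu$ of $S=\{(\g,\h):(\g\star\h)_m=0\ \text{for all }m\in I_{gh}\setminus I\}$, the common zero set of $\gamma$ polynomials $p_m$, each not identically zero exactly because $m\in I_{gh}=I_g+I_h$. If these $\gamma$ conditions are generically independent then $\dim S=k_g+k_h-\gamma$, and the kernel computation of (i) applies verbatim on $S$ because the scaling direction $(c\g,-c\h)$ is tangent to $S$ — it does not move $\g\star\h$ at all — so $\dim\mu(S)=k_g+k_h-1-\gamma$. The independence of the $\gamma$ vanishing conditions is the one genuinely delicate point, which I would establish either by showing the Jacobian of $(p_m)_{m\in I_{gh}\setminus I}$ has full rank $\gamma$ at a generic point of $S$, or by induction on $\gamma$: delete one index $m_0$ of $I_{gh}$ at a time, note that $\{p_{m_0}=0\}$ has codimension exactly $1$ since $p_{m_0}\not\equiv0$, and verify that the surviving sumset indices still label coordinates that do not vanish identically on $\{p_{m_0}=0\}$, so the induction continues. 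I expect this transversality/independence step, rather than the dimension bookkeeping or the sumset inequality, to be the main obstacle.
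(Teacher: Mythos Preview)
Your proposal is substantially more detailed than the paper's own treatment: the paper simply refers the reader to \cite{fannjiang} for (i) and (iii) and asserts that (ii) ``directly follows from the properties of convolution,'' giving no self-contained argument. Your kernel computation for (i) via the differential of the bilinear convolution map is correct and standard, and your parenthetical about restricting to the maximal-rank locus is the right caveat. Your reduction of (ii) to the integer sumset inequality $|A+B|\ge|A|+|B|-1$ together with the arithmetic-progression characterisation of equality is exactly the underlying mechanism, and you are right to flag that equality forces a \emph{common} step for both supports --- a sharpening the paper's statement of (ii) leaves implicit. For (iii) you correctly isolate the generic independence (transversality) of the $\gamma$ vanishing conditions as the only non-bookkeeping step and sketch two reasonable attacks; since the paper outsources this entirely to the cited reference, there is nothing further to compare against, and your candour that this is the genuine obstacle is appropriate. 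In short, you supply the argument the paper omits, and nothing in your outline is wrong.
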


\begin{proof}
We refer the readers to \cite{fannjiang} for the proof of $(i)$ and $(iii)$. $(ii)$ directly follows from the properties of convolution.
\end{proof}

\begin{lem}
Suppose $\f=\g*\h$, with $\f$ having non-uniform support where as $\g$ and $\h$ have uniform support, also has the additional property that $\f'$ has non-uniform support. Then, the set of such signals is a manifold of dimension strictly lesser than $k_g+k_h-1-\gamma$.
\label{big}
\end{lem}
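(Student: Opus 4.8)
The plan is to fix the combinatorial data, rephrase the statement as a codimension count over a finite family, and then isolate the one genuinely nontrivial ingredient. Fix the support sets $I_g,I_h$ of $\g$ and $\h$. By Lemma \ref{convprop}(ii), since both $\g$ and $\h$ are uniform, $k_{gh}=k_g+k_h-1$; for $I_{gh}$ to have this minimal cardinality the two progressions $I_g,I_h$ must share a common difference $d$. Fix also $I\subseteq I_{gh}$, the support of $\f$, with $|I|=k$. As $I_{gh}$ is itself an arithmetic progression (length $k_g+k_h-1$) while $\f$ has non-uniform support, $I$ is a proper subset of $I_{gh}$, so $\gamma:=k_{gh}-k\ge 1$; note also $k_g+k_h-1-\gamma=k$. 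Write $\f'=\g\star\tilde{\h}$ for the companion signal of Lemma \ref{autolem}. By Lemma \ref{convprop}(iii), the set $\mathcal M$ of signals $\f=\g\star\h$ with these prescribed supports is a manifold of dimension $k_g+k_h-1-\gamma$; the set in the statement is $\mathcal M'=\{\f\in\mathcal M:\ \f'\ \text{has non-uniform support}\}$, and the goal is $\dim\mathcal M'<k_g+k_h-1-\gamma$.

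The observation that drives everything is that $\tilde{\h}$ is again uniform with common difference $d$, so $I_{g\tilde h}$, the natural support of $\g\star\tilde{\h}$, is an arithmetic progression of length exactly $k_g+k_h-1$. Hence $\f'$ has non-uniform support if and only if $(\g\star\tilde{\h})_p=0$ for at least one index $p\in I_{g\tilde h}$ --- otherwise the support of $\f'$ is the full progression $I_{g\tilde h}$, which is uniform. Therefore $\mathcal M'\subseteq\bigcup_{p\in I_{g\tilde h}}\mathcal M_p$, where $\mathcal M_p=\{\f\in\mathcal M:\ (\g\star\tilde{\h})_p=0\}$, a union over the finite set $I_{g\tilde h}$. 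Each map $(\g,\h)\mapsto(\g\star\tilde{\h})_p$ is a bilinear polynomial on the parameter space of $\mathcal M$, so $\mathcal M_p$ is a proper subvariety of $\mathcal M$ --- of dimension at most $k_g+k_h-2-\gamma$ --- \emph{as soon as} this polynomial does not vanish identically on $\mathcal M$. Granting that, and using that a finite union of manifolds of dimension at most $m$ has dimension at most $m$ (union first over $p\in I_{g\tilde h}$, then over the finitely many $I$ of size $k$ and the finitely many admissible $(I_g,I_h)$), we obtain a set of dimension at most $k_g+k_h-2-\gamma<k_g+k_h-1-\gamma$, which is the claim; since this equals $k-1<k=\dim\mathcal F_k$, it is precisely what the almost-sure injectivity claim of this section needs.

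The one substantive point is thus the non-degeneracy just used: for each admissible $(I_g,I_h,I)$ and each $p\in I_{g\tilde h}$, the polynomial $(\g\star\tilde{\h})_p$ is not identically zero on $\mathcal M$. I would establish the stronger fact that $\prod_{p\in I_{g\tilde h}}(\g\star\tilde{\h})_p$ is not identically zero on $\mathcal M$, i.e. that some admissible pair $(\g,\h)$ with $\g\star\h$ supported in $I$ has $\g\star\tilde{\h}$ of full support $I_{g\tilde h}$. Such a pair should exist because $\mathcal M$ is non-empty (Lemma \ref{convprop}(iii)) while the $\gamma$ equations defining $\mathcal M$ constrain only the coefficients of $G(z)H(z)$: after a harmless relabeling so that $d=1$, $I_g=\{0,\dots,k_g-1\}$, $I_h=\{0,\dots,k_h-1\}$, one has $F'(z)=z^{k_h-1}G(z)H(1/z)$, whose extreme coefficients $g_0h_{k_h-1}$ and $g_{k_g-1}h_0$ are nonzero whenever $\g,\h$ have full support, and whose interior coefficients vanishing is a codimension-one condition on the (reciprocal) roots of $H$ that the zero-pattern of $F=GH$ does not force --- so one takes an admissible pair in which the roots of $H$ are in general position (distinct, none the reciprocal of another, none on the unit circle) and checks that $\g\star\tilde{\h}$ then has no interior zero. (One may also note that $\h$ is not forced to be palindromic on $\mathcal M$, so $\f'\neq\f$; but the dimension bound only needs $\f'$ generically of full support.) I expect this non-degeneracy --- and in particular checking that roots-in-general-position is dense on $\mathcal M$ itself, not merely on the ambient space of all $H$ --- to be the main obstacle; everything else is bookkeeping with Lemma \ref{convprop}.
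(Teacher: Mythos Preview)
Your approach is essentially the paper's: both argue by dimension counting that the extra constraints ``some coefficient of $\f'=\g\star\tilde{\h}$ vanishes'' cut the dimension of $\mathcal M$ strictly below $k_g+k_h-1-\gamma$. The paper's proof is in fact much terser than yours---it simply asserts that these new constraints ``result in a further reduction in dimension'' and points to \cite{fannjiang}---so your explicit isolation of the non-degeneracy requirement (that each $(\g\star\tilde{\h})_p$ not vanish identically on $\mathcal M$) and your honest flagging of it as the main obstacle actually go beyond what the paper spells out.
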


\begin{proof}
The idea of the proof is  similar to \cite{fannjiang}, based on dimension counting. We saw in Lemma \ref{convprop} that the set of signals $\f$ which can be represented as $\g*\h$ with sparsity $k$ can be written as a manifold of dimension $k_g+k_h-1-\gamma$. The new set of  constraints introduced by terms in $\f'$ being $0$ result in a further reduction in dimension. Hence the set of such signals belong to a manifold of dimension strictly lesser than $k_g+k_h-1-\gamma$.

\end{proof}

\begin{thm}[Main Theorem]
\label{mainthm}
Signals can be uniquely recovered from their autocorrelation, or equivalently, from the magnitudes of their Fourier Transforms almost surely iff they have non-uniform support.
\end{thm}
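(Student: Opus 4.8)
The statement is an equivalence, so the plan is to prove the two implications separately, reading ``uniquely recoverable almost surely'' as in the opening of Section~3 --- i.e.\ that on the class of signals with non-uniform support the map ``signal $\mapsto$ autocorrelation'' is injective up to the unavoidable ambiguities (global sign, time reversal $\x\mapsto\tilde{\x}$, and, under the cyclic convention of (\ref{autocorrdef}), circular shift) outside a measure-zero set, while for uniform support this injectivity is to fail on a positive-measure set. By Lemma~\ref{autolem}, any failure of uniqueness for $\f$ is witnessed by a factorization $\f=\g\star\h$, $\f'=\g\star\tilde{\h}$ with $\f'\neq\f$, and I will call the witness \emph{nontrivial} exactly when $\g$ and $\h$ both have sparsity $\ge 2$ and neither is symmetric up to shift and sign, since otherwise $\f'$ is one of the trivial ambiguities listed above.

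The forward implication is the short one. If $\f$ has uniform support then its $z$-transform $X(z)$ has degree $\ge 2$ (barring supports that are contiguous blocks of size at most $2$, where every signal is recoverable), and on a positive-measure subset of such $\f$ it factors nontrivially over $\mathcal{R}$ as $X=GH$ with $G,H$ of positive degree; since reversing $\h$ leaves $H(z)H(z^{-1})$ unchanged, $\f'=\g\star\tilde{\h}$ has the same autocorrelation as $\f$, and one can pick the factorization (generically in $\f$) so that $\f'$ is not a trivial ambiguity of $\f$. Hence unique recovery fails on a positive-measure set of uniform-support signals, which is the failure of almost-sure recovery.

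The converse is the substance. I would fix a non-uniform support $I$ of size $k$ and work in the $k$-dimensional space of signals supported on $I$; the goal is that the set $E$ of those $\f$ admitting a nontrivial witness $(\g,\h)$ \emph{with $\f'$ also of non-uniform support} is measure zero. Writing $k_g,k_h$ for the sparsities of $\g,\h$, $I_{gh}=I_g\oplus I_h$ for the generic support of $\g\star\h$, $k_{gh}=|I_{gh}|$, and $\gamma=k_{gh}-k\ge 0$, I would split the witnesses in two. If $\g$ and $\h$ do not both have uniform support, then $k_{gh}>k_g+k_h-1$ by Lemma~\ref{convprop}(ii), so by Lemma~\ref{convprop}(iii) the resulting $\f$ form a manifold of dimension $k_g+k_h-1-\gamma\le k-1<k$. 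If $\g$ and $\h$ both have uniform support, then $I_{gh}$ is an arithmetic progression with $k_{gh}=k_g+k_h-1$; as $\f$ has non-uniform support, $I\subsetneq I_{gh}$, so $\gamma\ge 1$, and I would now invoke Lemma~\ref{big} --- available precisely because $\f'$ is also required non-uniform --- to get that the resulting $\f$ form a manifold of dimension strictly below $k_g+k_h-1-\gamma=k$. Only countably many triples $(I_g,I_h,\gamma)$ contribute, so $E$ is a countable union of submanifolds of dimension $<k$ and hence null (Lemma~\ref{hayes}); ranging over all non-uniform $I$ closes the argument.

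I expect the only real difficulty to be the second sub-case of the converse, where $\g$ and $\h$ both have uniform support: there the naive dimension bound is exactly $k$, with no room to spare, so the proof genuinely rests on Lemma~\ref{big}'s refinement --- that insisting $\f'$ stay non-uniform costs at least one further independent vanishing constraint and thus strictly drops the dimension --- and this is also the reason the theorem must be phrased relative to the non-uniform-support class rather than to all signals. The remaining points (pinning down the trivial ambiguities so that a nontrivial witness forces $k_g,k_h\ge 2$ with $\g,\h$ non-symmetric, and checking that only countably many support pairs arise so the exceptional union stays null) are routine bookkeeping.
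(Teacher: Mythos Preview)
Your proposal is correct and follows essentially the same approach as the paper: the same two-case split according to whether $\g$ and $\h$ both have uniform support, with Lemma~\ref{convprop} handling the first case and Lemma~\ref{big} supplying the extra dimension drop in the second, and Lemma~\ref{hayes} yielding the measure-zero conclusion. Your version is in fact somewhat more careful than the paper's own --- you make explicit the countable union over support pairs $(I_g,I_h)$, you pin down the trivial ambiguities so that a genuine witness forces $k_g,k_h\ge 2$, and you spell out the forward (uniform $\Rightarrow$ non-unique) direction more fully --- but the skeleton is identical.
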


\begin{proof}
Let $\mathcal{F}'_k$ be the set of all signals $\f$ with non-uniform support of sparsity $k$ which have another signal $\f'$ with non-uniform support and same autocorrelation. Note that $\mathcal{F}_k'$ is the set of signals of sparsity $k$ which cannot be recovered uniquely from their autocorrelation. Lemma \ref{autolem} showed the existence of signals $\g$ and $\h$ such that 
\eq
\f=\g*\h \ \ \ \ \  \ \f'=\g*\tilde{\h}
\eqend
From Lemma \ref{convprop}, we note that the dimension of $\mathcal{F}'_k$ is less than or equal to $k_g+k_h-1-\gamma$

\subsubsection*{Case I: $k_{gh}>k_g+k_h-1$} 

This is the case if $\g$ and $\h$ do not have uniform support. In this case, the dimension of $\mathcal{F}'_k$ is strictly less than $k$. Hence from Lemma \ref{hayes}, we see that $\mathcal{F}'_k$ is a set of measure zero in $\mathcal{F}_k$ and signals with non-uniform support can be recovered from their autocorrelation almost surely.

\subsubsection*{Case II: $k_{gh} = k_g+k_h-1$}

In this case, $\g$ and $\h$ have uniform support. If $\f$ and $\f'$ have non-uniform support, from Lemma \ref{big}, we see that the dimension of $\mathcal{F}_k'$ is strictly lesser than $k=k_g+k_h-1-\gamma$, and hence can be uniquely recovered from their autocorrelation almost surely.

Suppose $\f$ or $\f'$ have uniform support, there will be no additional reduction in dimension.  This case is equivalent to recovering a one-dimensional signal uniquely with no additional constraints, which is almost surely not possible. 
\end{proof}

\section{Recovery Algorithms}

In this section, we develop two non-iterative recovery algorithms for the extraction of sparse signals from their autocorrelation. 

\subsection{Algorithm 1}

Algorithm 1 is based on combinatorial analysis. We propose a method to recover the support of the signal from the support of the autocorrelation, and prove that recovery is possible with very high probability if the sparsity of the signal is $o(n^{1/3})$. Using this support knowledge, we show that signals can be recovered from the autocorrelation with very high probability. 

Suppose $\x$ is a signal of length $n$ such that each element in $\x$ belongs to the support with a probability $\frac{s}{n}$, where $s=n^\alpha, \alpha\leq1$, independent of each other. Let $\A$ denote its autocorrelation, $k$ denote its sparsity and $D=\{d_1,d_2,.....d_k\}$ be the set of indices of the elements belonging to the support.  Also, let $d_{ij}$ be defined as $|d_i-d_j|$ for $(i,j)=\{1,2,....k\}$. If $A$ is the set of indices of  elements belonging to the support of the autocorrelation, then $A=\{\bigcup_{i,j}{d_{ij}}\}$. Note that $d_{i,i+1}$ is a geometric random variable with parameter $\frac{s}{n}$. Without loss of generality, let us assume $d_{k-1,k} \geq d_{12}$, otherwise we could just flip the signal and consider the flipped signal. Define $A_1=\{d_{ij}-d_{12}|d_{ij} \in A\}$ and $A_2=\{d_{ij}-d_{k-1,k}|d_{ij} \in A\}$. 

The algorithm for signal recovery is described below. In what follows, we give a sequence of lemmas to justify various steps of the algorithm. 

\begin{algorithm}[h]
\caption{}
\label{algo1}
 \textbf{Input:} The autocorrelation $\bf{a}$ of the signal. \\
 \textbf{Output:} The sparse signal ${\bf{x}}$ which has  autocorrelation $\bf{a}$

\begin{itemize}
\item Extract $d_{12}$ and $d_{k-1,k}$ from $A$ (Lemma \ref{extractd}). Calculate the sets $A_1$ and $A_2$.
\item Perform $(A\cap A_1) \cap (d_{2,k-1}-(A\cap A_2))$ and identify the support $\U$ of $\x$ (Lemma \ref{extractu})
\item Construct the graph $G$ (Lemma \ref{defg}) using $\U$, identify an odd cycle and a path connecting all the vertices and extract $\x$. 
\end{itemize}

\end{algorithm}

\begin{lem}
The sparsity $k$ of the signal satisfies $(1-\eps)s \leq k  \leq (1+\eps)s$ with very high probability for any $\eps>0$, $n>n(\eps)$.
\end{lem}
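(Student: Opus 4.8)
The plan is to observe that the sparsity $k$ is just a sum of independent indicator variables and then quote a standard multiplicative Chernoff bound. Writing $\x=(x_0,x_1,\dots,x_{n-1})$, let $Z_i$ be the indicator of the event that the index $i$ belongs to the support. By the modelling assumption, the $Z_i$ are independent Bernoulli random variables with $\mathbb{P}(Z_i=1)=\tfrac{s}{n}$, and $k=\sum_{i=0}^{n-1}Z_i$. Hence $k$ is $\mathrm{Binomial}(n,\tfrac{s}{n})$ with mean $\mathbb{E}[k]=n\cdot\tfrac{s}{n}=s$, so the statement is precisely a concentration-of-measure claim for a binomial around its mean.

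First I would apply the multiplicative Chernoff bounds for the lower and upper deviations of a sum of independent $[0,1]$-valued random variables: for every $\eps\in(0,1)$,
\[
\mathbb{P}\big(k\le(1-\eps)s\big)\le\exp\!\big(-\eps^2 s/2\big),\qquad
\mathbb{P}\big(k\ge(1+\eps)s\big)\le\exp\!\big(-\eps^2 s/3\big),
\]
and a union bound then gives $\mathbb{P}\big((1-\eps)s\le k\le(1+\eps)s\big)\ge 1-2\exp(-\eps^2 s/3)$. (For $\eps\ge 1$ the lower tail is vacuous and the upper-tail estimate alone already suffices.) Since $s=n^{\alpha}$ with $\alpha>0$, we have $s\to\infty$ as $n\to\infty$, so the failure probability $2\exp(-\eps^2 n^{\alpha}/3)\to 0$; in particular, for any prescribed confidence level there is a threshold $n(\eps)$ beyond which the two-sided bound holds, which is exactly the asserted ``very high probability'' conclusion.

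There is essentially no hard step here — the argument is the textbook Chernoff estimate for a binomial. The only point meriting a word of care is the choice of inequality: an additive bound such as Hoeffding would give a failure probability of order $\exp(-\eps^2 n^{2\alpha-1})$, which decays only when $\alpha>1/2$, whereas the multiplicative form above yields decay for all $\alpha>0$; this is why we use the relative-error Chernoff bound rather than Hoeffding. The implicit convention that ``with very high probability'' means ``with probability tending to $1$ as $n\to\infty$'' (with the rate $1-2\exp(-\eps^2 n^{\alpha}/3)$ made explicit above) should also be noted, and the hypothesis $\alpha>0$ is what makes the claim non-vacuous.
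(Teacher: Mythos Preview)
Your argument is correct. The paper's own proof is a single line---``Use Chebyshev's inequality''---so you have taken a slightly different route. Chebyshev already suffices here: since $k$ is $\mathrm{Binomial}(n,s/n)$ with mean $s$ and variance $s(1-s/n)\le s$, one gets $\mathbb{P}(|k-s|\ge\eps s)\le s/(\eps s)^2=1/(\eps^2 s)\to 0$ because $s=n^\alpha\to\infty$. Your multiplicative Chernoff bound is a strictly sharper tool, yielding exponential rather than polynomial decay of the failure probability, and your remark about why Hoeffding would not cover the full range $\alpha\in(0,1)$ is a nice observation; but for the purposes of this lemma the cruder Chebyshev estimate is enough and avoids invoking the heavier inequality.
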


\begin{proof}
Use Chebyshev's inequality.
\end{proof}

\begin{lem} For three independent random variables $X_1$, $X_2$ and $X_3$ where $X_1$ and $X_2$ are geometric random variables with parameter $\frac{s}{n}$ , $P(X_1-pX_2=qX_3) \leq \frac{s}{n}$ if $s=n^\alpha, \alpha<1$ for  $n>n(\eps)$, where $p$ and $q$ are integers.
\label{lem41}
\end{lem}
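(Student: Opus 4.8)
The plan is to prove the bound by conditioning on $X_2$ and $X_3$ and then invoking the anti-concentration of a geometric random variable: its probability mass function is everywhere at most its parameter. This reduces the statement to a one-line computation once the conditioning is set up correctly.

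Concretely, I would proceed in three steps. First, record the elementary fact that if $X$ is geometric with parameter $\theta=\frac{s}{n}$, then $P(X=m)\le\theta$ for every real $m$: on the support of $X$ the mass function is maximized at the smallest point of the support, where it equals $\theta$, and off the support it is $0$. Second, using that $X_1$ is independent of the pair $(X_2,X_3)$, expand by the law of total probability
\[
P(X_1-pX_2=qX_3)=\sum_{x_2,x_3}P(X_2=x_2)\,P(X_3=x_3)\,P\!\left(X_1=px_2+qx_3\right),
\]
the sum ranging over the values attained by $X_2$ and $X_3$. Third, bound each inner factor $P(X_1=px_2+qx_3)$ by $\theta$ from the first step and resum the remaining probabilities to $1$, which yields $P(X_1-pX_2=qX_3)\le\theta=\frac{s}{n}$. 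The integrality of $p$ and $q$ is used only to keep $px_2+qx_3$ an integer, and no property of the law of $X_3$ is needed beyond independence from $X_1$.

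There is essentially no hard step here; the one point that deserves a moment's care is the degenerate case in which $px_2+qx_3$ lies outside the support of $X_1$ (for instance when it is negative), but then the corresponding conditional probability is exactly $0$, which only strengthens the bound. I would also note that the hypothesis $\alpha<1$ is not actually used to establish the inequality — which holds for every $n$ — but is precisely what makes it useful downstream, since it forces $\frac{s}{n}=n^{\alpha-1}\to0$ and hence lets the estimate survive the union bounds taken over $O(k^2)=O(s^2)$ index pairs in the subsequent lemmas. If a rate of order $o(s/n)$ were ever needed, the crude pointwise bound $P(X_1=m)\le\theta$ would have to be replaced by a genuine summation of the geometric tails, but the stated $\frac{s}{n}$ is all that Algorithm~\ref{algo1} requires.
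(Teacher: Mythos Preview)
Your argument is correct, and in fact cleaner than the one in the paper. The paper first isolates a two-variable lemma, bounding $P(X_1-pX_2=c)$ for fixed integer $c$ by summing the geometric series $\sum_i (1-\tfrac{s}{n})^{(1+p)i}$ exactly and then Taylor-expanding the denominator $1-(1-\tfrac{s}{n})^{1+p}\approx(1+p)\tfrac{s}{n}$; only at this last asymptotic step does the condition $n>n(\eps)$ enter. It then conditions on $X_3$ alone and applies the two-variable bound inside the expectation. You instead condition on the pair $(X_2,X_3)$ at once and invoke only the pointwise bound $P(X_1=m)\le\tfrac{s}{n}$, which sidesteps the series computation entirely.

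What each approach buys: the paper's explicit summation actually yields the sharper constant $\tfrac{1}{1+p}\cdot\tfrac{s}{n}(1+o(1))$, though this gain is never used downstream. Your route is more robust --- it needs no hypothesis on the law of $X_2$ or $X_3$ beyond independence from $X_1$, holds for every $n$ rather than only $n>n(\eps)$, and, as you observe, shows that the assumption $\alpha<1$ is irrelevant to the inequality itself and matters only for the later union bounds.
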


\begin{proof}
Refer Appendix.
\end{proof}

\begin{lem}
\label{lem43}
$P(d_{k-1,k}-d_{12} \in A) \leq (1+\eps)\frac{s^3}{n}$ for any $\eps>0$, $n>n(\eps)$.
\end{lem}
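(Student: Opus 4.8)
The plan is a union bound over the index pairs that generate the autocorrelation support. Write $d_1<d_2<\cdots<d_k$ for the sorted support indices, so that $d_{12}=d_2-d_1$, $d_{k-1,k}=d_k-d_{k-1}$, and $d_{ij}=\sum_{\ell=i}^{j-1}d_{\ell,\ell+1}$ for $i<j$; each $d_{ij}$ is thus a sum of consecutive gaps, and we recall that the gaps $d_{\ell,\ell+1}$ are geometric with parameter $s/n$ and that gaps over disjoint index ranges are independent. First I would condition on the high-probability event $\mathcal{E}=\{(1-\eps')s\le k\le (1+\eps')s\}$ furnished by the concentration bound on $k$ proved above (for a small constant $\eps'$ to be chosen), which for $n$ large forces $k\ge 3$, so that $d_{12}$ and $d_{k-1,k}$ are distinct, hence independent, gaps; the contribution of $\mathcal{E}^c$ decays faster than any polynomial and is swallowed by the final slack. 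Since the elements of $A$ are exactly the values $d_{ij}$ with $1\le i,j\le k$ together with $0$, I would write
\eqs
P\big(d_{k-1,k}-d_{12}\in A\big)\;\le\;\sum_{1\le i<j\le k}P\big(d_{k-1,k}-d_{12}=d_{ij}\big)\;+\;P\big(d_{k-1,k}=d_{12}\big),
\eqsend
so it suffices to bound each of the at most $\binom{k}{2}+1$ terms by $s/n$.

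For a single term, the key is to rewrite the identity $d_{k-1,k}-d_{12}=d_{ij}$ in a form to which Lemma~\ref{lem41} applies, according to whether the gap range $\{i,\dots,j-1\}$ meets the index $1$ (which occurs in $d_{12}$) and/or the index $k-1$ (which occurs in $d_{k-1,k}$). If $2\le i$ and $j\le k-1$, then $d_{ij}$ uses only gaps disjoint from those in $d_{12}$ and $d_{k-1,k}$ and is therefore independent of both, and the event has the shape $X_1-X_2=X_3$ with $X_1=d_{k-1,k}$, $X_2=d_{12}$ geometric and $X_3=d_{ij}$ independent; Lemma~\ref{lem41} with $p=q=1$ bounds its probability by $s/n$. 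If $i=1$ and $j\le k-1$, I would split $d_{1j}=d_{12}+Y$ with $Y=\sum_{\ell=2}^{j-1}d_{\ell,\ell+1}$ (empty, i.e.\ $0$, when $j=2$), which is independent of $d_{12}$ and $d_{k-1,k}$, turning the event into $d_{k-1,k}-2d_{12}=Y$, and Lemma~\ref{lem41} with $p=2$, $q=1$ again gives $s/n$. If $j=k$, the gap $d_{k-1,k}$ occurs on both sides and cancels, forcing $d_{12}+(\text{a sum of nonnegative gaps})=0$, which is impossible since $d_{12}\ge1$, so the term is $0$; the same holds for $(i,j)=(1,k)$. Finally $P(d_{k-1,k}=d_{12})\le s/n$, since two independent geometrics agree with probability at most their mode probability $s/n$.

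With every term bounded by $s/n$, on $\mathcal{E}$ the union bound gives $P(d_{k-1,k}-d_{12}\in A)\le\big(\binom{k}{2}+1\big)\frac{s}{n}\le\big(\frac{(1+\eps')^2 s^2}{2}+1\big)\frac{s}{n}$; choosing $\eps'$ with $(1+\eps')^2<2$ and absorbing the lower-order $s/n$ together with the negligible probability $P(\mathcal{E}^c)$ yields $P(d_{k-1,k}-d_{12}\in A)\le(1+\eps)\frac{s^3}{n}$ for every $\eps>0$ and all $n>n(\eps)$, provided $\alpha<1$ so that Lemma~\ref{lem41} is in force. The main obstacle is precisely the bookkeeping in this case analysis: when the index range of $d_{ij}$ overlaps $\{1,2\}$ or $\{k-1,k\}$, the quantities $d_{12}$, $d_{k-1,k}$ and $d_{ij}$ are no longer independent and the identity must be rearranged into an admissible shape ($X_1-pX_2=qX_3$ with independent geometric $X_1,X_2$, or a manifestly impossible equation) before Lemma~\ref{lem41} can be invoked. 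A secondary point to be careful about is the idealization that the consecutive gaps are i.i.d.\ geometric, and that the ``flip'' normalization $d_{k-1,k}\ge d_{12}$ and the conditioning on $k$ cost only constant factors that the $(1+\eps)$ slack comfortably absorbs.
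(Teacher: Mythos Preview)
Your proposal is correct and follows essentially the same route as the paper: union bound over $P(d_{k-1,k}-d_{12}=d_{ij})$, then the identical three-way case split---$d_{ij}$ disjoint from both endpoint gaps (direct application of Lemma~\ref{lem41}), $d_{ij}$ containing $d_{12}$ (rewrite as $d_{k-1,k}-2d_{12}=d_{2j}$ and apply Lemma~\ref{lem41} with $p=2$), and $d_{ij}$ containing $d_{k-1,k}$ (impossible by positivity)---followed by $k\le(1+\eps')s$ to close. Your version is slightly more explicit about the conditioning on the concentration event for $k$, the $d_{k-1,k}=d_{12}$ term, and the independence bookkeeping, and you count $\binom{k}{2}+1$ rather than $k^2$ terms, but these are cosmetic refinements of the same argument.
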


\begin{proof}
Using union bound, we obtain
\eq
P(d_{k-1,k}-d_{12} \in A) \leq \sum_{i}\sum_{j} P(d_{k-1,k}-d_{12}= d_{ij}) 
\eqend 
\eqs
= \sum_{i \neq1}\sum_{j\neq k} P(d_{k-1,k}-d_{12}= d_{ij})+ \sum_{i\neq1}P(d_{k-1,k}-d_{12} = d_{ik} )
\eqsend
\eq
+\sum_{j \neq k}P(d_{k-1,k}-d_{12} = d_{1j}))
\eqend
Note that the $d_{ij}$'s for $i\neq1,j\neq k$ are independent of $d_{12}$ and $d_{k-1,k}$. Hence Lemma \ref{lem41} can be applied and each term in the first summation can be upper bounded by $\frac{s}{n}$. Since $d_{k-1,k} < d_{ik}$ and $d_{12}>0$, all the terms in the second summation are  zero. The terms in the third summation can be equivalently written as $P(d_{k-1,k}-2d_{12}= d_{2j})$, and Lemma \ref{lem41} can be used to upper bound every term by $\frac{s}{n}$. Since $d_{1k}$ is the largest sum, we need not consider it in the summation. Hence, we get
\eq
P(d_{k-1,k}-d_{12} \in A)\leq k^2{\frac{s}{n}} \leq (1+\eps')^2\frac{s^3}{n} \leq (1+\eps)\frac{s^3}{n}
\eqend
\end{proof}

\begin{lem}
\label{extractd}
$d_{12}$ and $d_{k-1,k}$ can be recovered from the autocorrelation with very high probability if $s=o(n^{1/3})$.
\end{lem}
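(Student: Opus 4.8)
The plan is to recover the two quantities separately: $d_{12}$ will be read off the two largest entries of $A$ with no randomness at all, and $d_{k-1,k}$ will be obtained by scanning $A$ downward from its second-largest entry and invoking the probabilistic estimate of Lemma~\ref{lem43}.

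For $d_{12}$: the largest element of $A$ is the diameter $M:=d_{1k}=d_k-d_1$, and I would show that the second-largest is exactly $d_{2,k}=M-d_{12}$. If $d_{ij}\in A$ with $d_j>d_i$ and $d_{ij}>M-d_{12}$, then $d_j-d_1\ge d_{ij}>d_k-d_2$, and since $d_k-d_2+d_1\ge d_{k-1}$ by the standing assumption $d_{12}\le d_{k-1,k}$, we must have $j=k$; then $d_k-d_i>d_k-d_2$ forces $i=1$, so $d_{ij}=M$. Thus $M$ is the only element of $A$ above $M-d_{12}$, and $d_{12}=M-(\text{second largest element of }A)$.

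For $d_{k-1,k}$: write $g:=d_{1,k-1}=M-d_{k-1,k}$. The same kind of estimate shows that the elements of $A$ strictly above $g$ are precisely $M=d_{1k},\,d_{2,k},\,d_{3,k},\dots,d_{r,k}$ with $r$ the largest index such that $d_{1i}<d_{k-1,k}$, and that $g$ is the next element below these (all strict, since $d_{1i}$ is strictly increasing in $i$). The crucial point is that for every one of these entries $d_{i,k}$, $2\le i\le r$, one has the \emph{deterministic} identity $M-d_{i,k}-d_{12}=d_{1i}-d_{12}=d_{2,i}\in A$, whereas for $a=g$ one gets $M-g-d_{12}=d_{k-1,k}-d_{12}$, which by Lemma~\ref{lem43} lies in $A$ with probability at most $(1+\eps)s^3/n$. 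Hence, letting $a^\star$ be the largest element of $A$ with $a^\star<M$ and $M-a^\star-d_{12}\notin A$, we have $a^\star=g$ with probability at least $1-(1+\eps)s^3/n$, and we output $d_{k-1,k}=M-a^\star$. For $s=o(n^{1/3})$ this probability tends to $1$, which together with the deterministic recovery of $d_{12}$ proves the lemma. (The degenerate cases $k\le 3$ and $d_{12}=d_{k-1,k}$, where some of the listed elements coincide, are checked by hand; and which member of the pair $\{d_{12},d_{k-1,k}\}$ is called which is just the flip ambiguity already normalized away.)

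The step I expect to be the main obstacle is the structural description of the top of $A$ --- in particular ruling out that entries such as $d_{2,k-1}$ or $d_{1,k-2}$ lie above $g$, and handling possible ties among the $d_{i,k}$ and between $d_{r,k}$ and $g$. Everything else is either the one-line identity for $d_{12}$ or the single application of Lemma~\ref{lem43}: note that only one probabilistic test is needed, since every entry of $A$ scanned before reaching $g$ passes with certainty, so no union bound is required and the hypothesis $s=o(n^{1/3})$ enters solely through the $s^3/n$ bound.
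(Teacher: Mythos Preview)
Your proposal is correct and follows essentially the same approach as the paper: recover $d_{12}$ deterministically from the two largest elements of $A$, then scan down and identify $d_{1,k-1}$ as the largest entry $a$ for which $d_{2k}-a\notin A$ (your test $M-a-d_{12}\notin A$ is literally the same, since $M-d_{12}=d_{2k}$), with failure probability bounded by Lemma~\ref{lem43}. The structural ``obstacle'' you flag is in fact immediate: if $j\le k-1$ then $d_{ij}\le d_{1,k-1}=g$, so only entries of the form $d_{ik}$ can lie above $g$; and the tie case $d_{r+1,k}=g$ forces $d_{k-1,k}-d_{12}=d_{2,r+1}\in A$, which is already absorbed into the event bounded by Lemma~\ref{lem43}.
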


\begin{proof}
The first and second highest terms in $A$ are $d_{1k}$ and $d_{2k}$ respectively since  $d_{12}\leq d_{k-1,k}$. Note that $d_{1k}-d_{2k} = d_{12}$, hence $d_{12}$ can be recovered from the autocorrelation. The only terms that can be higher than $d_{1,k-1}$ in $A$ are $\{d_{3k}, d_{4k},.....d_{k-1,k}\}$. Note that $d_{2k}-d_{ik}=d_{2i}$, which belongs to $A$ for all  $i=\{3,.....k-1\}$. So if $d_{2k}-d_{1,k-1}$ doesn't belong to $A$, we can recover $d_{1,k-1}$ by considering the highest term which when subtracted from $d_{2k}$ produces a value which doesn't belong to $A$. The probability of failure can hence be written as $P(d_{k-1,k}-d_{12} \in A)$ which goes to zero if $s=o(n^\frac{1}{3})$, as seen in Lemma \ref{lem43}. Hence both $d_{12}$ and $d_{k-1,k}$ can be recovered with very high probability if $s=o(n^{1/3})$.
\end{proof}

With the knowledge of $d_{12}$ and $d_{k-1,k}$, we can construct the sets $A_1$ and $A_2$. Consider the intersection of $A$ and $A_1$. All entries of the form $d_{2i}$ for $i=\{3,4,...k\}$ will survive trivially for any signal. Similarly, all entries of the form  $d_{i,k-1}$ for $i=\{1....k-2\}$ will survive the intersection of $A$ and $A_2$ for any signal. If we subtract the survivors of the intersection of $A$ and $A_2$ from $d_{2,k-1}$, we get $d_{2i}$ for $i=\{3,4,...k-1\}$. Hence the elements $d_{2i}$ for $i=\{3,4,...k-1\}$ will survive $(A\cap A_1) \cap (d_{2,k-1}-(A\cap A_2))$.

\begin{lem}
\label{extractu}
No other $d_{ij}$ will survive $(A\cap A_1) \cap (d_{2,k-1}-(A\cap A_2))$ and hence the support can be recovered with very high probability if $s=o(n^{1/3})$ 
\end{lem}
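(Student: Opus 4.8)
The plan is to split the output set $S=(A\cap A_1)\cap\big(d_{2,k-1}-(A\cap A_2)\big)$ into the part that survives for \emph{every} admissible support, which I will show is exactly $\{d_{2i}:3\le i\le k-1\}$ (plus the harmless value $0$), and a part that survives only through an accidental coincidence of distances, and then to bound the probability that the latter is nonempty by $o(1)$ whenever $s=o(n^{1/3})$.

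First I would pass to the gap variables $g_\ell=d_{\ell+1}-d_\ell$, $\ell=1,\dots,k-1$. After a global translation we may take $d_1=0$, so $d_m=g_1+\cdots+g_{m-1}$ and every element of $A$, $A_1$, $A_2$ and $d_{2,k-1}-(A\cap A_2)$ is an explicit $\{0,\pm1\}$-combination of the $g_\ell$; crucially the $g_\ell$ are independent geometric random variables with parameter $s/n$, the setting of Lemma \ref{lem41}. A value $v=d_{ij}$ (with $i<j$, so $v=g_i+\cdots+g_{j-1}$) lies in $S$ iff the four conditions $v\in A$, $v+d_{12}\in A$, $d_{2,k-1}-v\in A$ and $d_{2,k-1}-v+d_{k-1,k}\in A$ all hold. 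Reading ``$u\in A$'' (for a non-negative gap-combination $u$) as ``the gap-indices occurring in $u$ form a contiguous block $\{a,\dots,b-1\}$'', I would first determine which pairs $(i,j)$ make each of the last three conditions a polynomial identity in the $g_\ell$: one checks that $v+d_{12}\in A$ holds identically iff $i=2$; that, given $i=2$, $d_{2,k-1}-v\in A$ holds identically iff $j\le k-1$, in which case $d_{2,k-1}-v+d_{k-1,k}=d_{jk}$ also holds identically; and that the remaining pair $(2,k)$ is excluded outright because $d_{2,k-1}-d_{2k}<0\notin A$. Hence the values of $S$ that survive for every support are exactly $\{d_{2i}:3\le i\le k-1\}$ (together with $0$), which is the set asserted just before the lemma; and — more to the point — any \emph{other} element of $S$ must make one of the three membership conditions hold even though it is \emph{not} an identity, either because $i\neq2$, or (when $i=2$) because $j\notin\{3,\dots,k-1\}$.

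Step 2 is to show that such accidental survivors are rare. For a candidate value $v=d_{ij}$ that is not a guaranteed survivor, I would fix a membership condition, say $v+d_{12}\in A$, that is not an identity for $v$; for the index pair $(a,b)$ that witnesses it, the relation $d_{ij}+d_{12}=d_{ab}$ is then a non-vanishing integer linear relation among the $g_\ell$, which can be solved for one $g_\ell$ in terms of the others, so (that $g_\ell$ being geometric with parameter $s/n$) it holds with probability at most $s/n$ by Lemma \ref{lem41}. If one can moreover argue that the witnessing pair $(a,b)$ is pinned down by $(i,j)$ up to $O(1)$ choices — see the obstacle below — then each non-good value fails with probability $O(s/n)$, and summing over the $O(k^2)$ candidate values together with $k\le(1+\eps)s$ gives a failure bound of order $s^3/n$, which tends to $0$ precisely when $s=o(n^{1/3})$. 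Combined with the characterization above, this shows $S=\{d_{2i}:3\le i\le k-1\}\cup\{0\}$ with very high probability, and then the numbers $\{d_i-d_2:3\le i\le k-1\}$ together with the quantities $d_{12}=d_2-d_1$ and $d_{k-1,k}=d_k-d_{k-1}$ recovered in Lemma \ref{extractd} pin down $d_1,\dots,d_k$ up to the global shift, i.e.\ the support $\U$ of $\x$.

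The delicate step, and what I expect to be the main obstacle, is exactly the counting in Step 2: the membership conditions are themselves existential (``\emph{some} index pair witnesses $v+d_{12}\in A$''), so a naive union bound over all witnessing pairs inflates the count to $O(k^4)$ and would only deliver $s=o(n^{1/5})$. To hold the count at $O(k^2)$ one has to use the three membership conditions \emph{jointly}, exploiting that $d_{12}$, $d_{k-1,k}$, and hence $d_{2,k-1}$, are already known: this couples the conditions tightly enough that, for a non-good $v$, the witnessing index pairs are forced up to boundedly many choices by $(i,j)$, so each bad value contributes only $O(s/n)$. Treating separately the boundary pairs (those with $i$ or $j$ in $\{1,2,k-1,k\}$), for which one of the three conditions is in fact deterministically false, and the benign value $v=0$, then completes the argument.
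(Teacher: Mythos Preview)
Your approach—reduce to the three non-trivial membership conditions $v+d_{12}\in A$, $d_{2,k-1}-v\in A$, $d_{2,k}-v\in A$, bound each accidental coincidence via the geometric gap structure and Lemma~\ref{lem41}, then union-bound—is exactly the paper's. Your characterization of the guaranteed survivors as $\{d_{2i}:3\le i\le k-1\}$ matches what the paper establishes in the paragraph immediately preceding the lemma, and your recovery of the full support from those values plus $d_{12},d_{k-1,k}$ is the paper's final step verbatim.

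Where you go beyond the paper is in flagging the counting obstacle, and that obstacle is real: the paper does not resolve it any more carefully than you do. The paper's argument is three sentences. It asserts that ``the probability that $d_{ij}$ is a particular value can be upper bounded by $\frac{1}{n}$'' (read literally this is at best $s/n$ for a sum of geometrics), lists the same three membership conditions you list, and then writes ``using union bounds, we see that the probability of survival of some other $d_{ij}$ goes to $0$ when $s=o(n^{1/3})$.'' There is no indication of how the count is kept at $O(k^2)$ rather than the naive $O(k^4)$ you correctly compute, nor any appeal to using the three conditions jointly. So you have been more scrupulous than the paper in naming the delicate step; your proposed remedy—couple the three conditions so that the witnessing pairs are essentially forced by $(i,j)$—is a reasonable line of attack, but it is not something the paper itself carries out.
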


\begin{proof}
Suppose you choose $d_{ij}$ such that $i$ and $j$ are picked at random. The probability that $d_{ij}$ is a particular value can be upper bounded by $\frac{1}{n}$. For a non-trivial $d_{ij}$ in $A$ to survive  $A \bigcap A_1$, $d_{ij}+d_{12}$ has to be in $A$. Similarly, $d_{2k}-d_{ij}$ and $d_{2,k-1}-d_{ij}$ has to be in A for it to survive $d_{2k}-A \bigcap A_2$. Using union bounds, we see that the probability of survival of some other $d_{ij}$  goes to $0$ when $s=o(n^{1/3})$. Note that we have information about $d_{k-1,k}$ upto  $s=o(n^{1/3})$.

If no other elements survive, from $d_{2i}$ for $i=\{3,4,...k-1\}$, we can extract $d_{i,i+1}$ for  $i=\{3,4,...k-2\}$ and since we already know $d_{12}$ and $d_{k-1,k}$, we have the support of the signal.

\end{proof}

Suppose we have the support of the signal, $D=\{d_1,d_2,.....d_k\}$ being the indices of the elements belonging to the support. Define a pair $(d_i,d_j)$ as a good pair if they are the only pair separated by $|d_i-d_j|$. Note that for such a pair, $a_{|d_i-d_j|}=x_{d_i}x_{d_j}$

\begin{lem}
\label{defg}
Consider a graph $G$ with $k$ vertices, each vertex representing an element of the support. Draw a weighted edge between every good pair, the weight being the value of the corresponding autocorrelation. If the graph $G$ has an odd cycle and is connected, then the signal can be extracted from the autocorrelation upto a global sign.
\end{lem}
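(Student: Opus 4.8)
The plan is to use the two hypotheses for two different jobs: connectedness will let me propagate a single unknown scalar to the entire support, and the odd cycle will pin that scalar down to its sign. As a preliminary, note that we may assume $x_{d_i}\neq 0$ for every $d_i\in D$ (entries that vanish do not really belong to the support), and that for a good pair $(d_i,d_j)$ the hypothesis gives $a_{|d_i-d_j|}=x_{d_i}x_{d_j}$ \emph{exactly}, since no other pair of support indices is separated by $|d_i-d_j|$; hence every edge of $G$ carries, as its weight, the product of the two (nonzero) signal values at its endpoints. Also observe that, once $D$ is known, whether a pair is good is determined by $D$ alone, so $G$ is indeed constructible from the given data.

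\textbf{Step 1 (propagation along a spanning tree).} First I would pick an arbitrary vertex $v^\star$ and treat $x_{v^\star}$ as an unknown. Since $G$ is connected, it has a spanning tree $T$; root $T$ at $v^\star$ and process its vertices in breadth-first order. If $v$ is a child of $u$ in $T$ and $w$ is the weight of the edge $\{u,v\}$, then $w=x_u x_v$, so $x_v=w/x_u$. By induction on the depth, every $x_v$ is then written as a known nonzero constant times $x_{v^\star}^{(-1)^{\operatorname{depth}(v)}}$. In particular, as soon as $x_{v^\star}$ is known, the whole signal $\x$ is determined.

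\textbf{Step 2 (resolving $x_{v^\star}$ via the odd cycle).} Now choose $v^\star$ to lie on the odd cycle $C=(u_0=v^\star,u_1,\dots,u_{L-1},u_0)$ with $L$ odd. Traversing $C$ from $u_0$ and using the edge weights $w_i=x_{u_{i-1}}x_{u_i}$, I would write $x_{u_1}=w_1/x_{u_0}$, then $x_{u_2}=w_2/x_{u_1}=(w_2/w_1)\,x_{u_0}$, and so on, so that the expression for $x_{u_i}$ alternates between a constant times $x_{u_0}$ and a constant times $x_{u_0}^{-1}$. Closing the cycle with the last edge $\{u_{L-1},u_0\}$ and using that $L$ is odd yields a relation of the form $x_{u_0}=c\,x_{u_0}^{-1}$, i.e.\ $x_{u_0}^2=c$, for an explicitly computable constant $c>0$ (equivalently, $c$ is the alternating product of the weights around $C$). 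Had $L$ been even, the same traversal would only have produced the tautology $x_{u_0}=x_{u_0}$, which is precisely why an odd cycle is required. Thus $x_{v^\star}=x_{u_0}=\pm\sqrt{c}$.

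\textbf{Step 3 (global sign) and the main obstacle.} Feeding either value of $x_{v^\star}$ back into Step 1 produces a complete candidate signal; since every $x_v$ equals a fixed constant times $x_{v^\star}^{\pm1}$, the two candidates differ exactly by an overall sign, and as $\x$ and $-\x$ always share the same autocorrelation this residual ambiguity is unavoidable — which is the content of the lemma. I expect the only delicate point to be the parity bookkeeping in Step 2: one must verify carefully that the closed traversal of an odd cycle yields the genuine equation $x_{v^\star}^2=c$ rather than a vacuous identity, and that the constant $c$ so obtained is well defined (consistent around the cycle). Everything else is a routine induction.
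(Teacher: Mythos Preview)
Your proposal is correct and follows essentially the same route as the paper: form the alternating product of edge weights around the odd cycle to obtain the square of one signal value (your $x_{v^\star}^2=c$, the paper's $\frac{x_{i_1i_2}x_{i_3i_4}\cdots x_{i_{2r-1}i_1}}{x_{i_2i_3}\cdots x_{i_{2r-2}i_{2r-1}}}=x_{i_1}^2$), then propagate to the remaining vertices by connectedness. Your write-up is more detailed---you spell out the spanning-tree induction and the parity bookkeeping---but the underlying argument is the same.
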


Consider an odd cycle with $2r-1$ vertices ${i_1},{i_2},...i_{2r-1}$. The term $\frac{x_{i_1i_2}x_{i_3i_4}....x_{i_{2r-1}i_1}}{x_{i_{2}i_{3}}...x_{i_{2r-2}i_{2r-1}}}$ gives $x_{i_1}^2$, from which $x_{i_1}$ can be extracted upto a sign, and from it the other terms in the odd cycle can be extracted using the weight corresponding to the edges. Since the graph is connected, all the other terms can be calculated.

\begin{lem}
The graph $G$ has an odd cycle and is connected with very high probability for $s=o(n^{1/3})$.
\end{lem}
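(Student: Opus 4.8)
The plan is to regard $G$ as the complete graph $K_k$ on the $k$ support vertices from which the ``bad'' pairs have been deleted, to show that the number of deleted edges is $o(k)$ with very high probability, and then to read off both conclusions from elementary properties of dense graphs.

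First I would estimate the random number $B$ of bad pairs. A pair $\{d_i,d_j\}$ is bad exactly when the gap $|d_i-d_j|$ is realized by at least one \emph{other} pair of support elements, so $B\le B'$, where $B'$ is the number of ordered pairs of distinct unordered pairs of support positions sharing a common gap. Since for each fixed $v$ the number of pairs of positions in $\{0,\dots,n-1\}$ with gap $v$ is at most $n$, there are at most $O(n^3)$ such ordered configurations in all; those involving four distinct positions contribute $O(n^3)\cdot(s/n)^4=O(s^4/n)$ to $\mathbb{E}[B']$ (each of the four positions lies in the support independently with probability $s/n$), and the only degenerate configurations, which force the three occupied positions to form a $3$-term arithmetic progression, number $O(n^2)$ and contribute $O(n^2)\cdot(s/n)^3=O(s^3/n)$. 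Hence $\mathbb{E}[B]\le\mathbb{E}[B']=O(s^4/n)$.

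Because $s=o(n^{1/3})$ we have $s^4/n=s\cdot(s^3/n)=o(s)$, so $\mathbb{E}[B]=o(s)$; Markov's inequality then gives $B=o(s)$ with probability $1-o(1)$, and combining this with the concentration estimate $(1-\eps)s\le k\le(1+\eps)s$ from an earlier lemma yields, with very high probability, $B=o(k)$ together with $k\to\infty$. On this event $G$ is obtained from $K_k$ by deleting $o(k)$ edges, so every vertex of $G$ has degree at least $k-1-B\ge k/2$ once $n$ is large. A graph on $k$ vertices with minimum degree at least $k/2$ is connected --- any component of size $\le k/2$ would contain a vertex of degree $<k/2$ --- and in fact has diameter at most $2$, which settles the connectivity claim.

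For the odd cycle it suffices to produce a triangle. Every bad pair lies in exactly $k-2$ triples of vertices, so at most $B(k-2)=o(k^2)$ of the $\binom{k}{3}=\Theta(k^3)$ triples fail to span a triangle; almost every triple is therefore a triangle, and in particular $G$ contains one, i.e.\ a cycle of odd length $3$. I expect the one genuinely delicate step to be the bound $\mathbb{E}[B]=O(s^4/n)$: one must enumerate the gap-collision configurations carefully, isolating the degenerate arithmetic-progression cases from the generic four-point ones, and verify that a generic quadruple of support positions realizes a prescribed gap collision only with probability $\Theta(1/n)$ after the membership probabilities are accounted for. Everything after that is routine counting together with the two standard facts about graphs of large minimum degree.
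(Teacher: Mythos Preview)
Your argument is correct, but it is organized differently from the paper's. The paper does not attempt to control the total number of bad pairs at all. Instead it simply \emph{fixes} a single spanning structure in advance: three vertices (to supply a triangle) together with a path of length $k-3$ from one of them through the remaining vertices. This structure has exactly $k$ edges; for each edge the probability that it is bad is union-bounded by $k^2/n$ (there are fewer than $k^2$ competing pairs, and each collision event has probability $O(1/n)$ by the same geometric-variable estimates used earlier in the section). A final union bound over the $k$ edges gives failure probability at most $k^3/n\to 0$, and the surviving triangle-plus-path already witnesses both connectivity and an odd cycle.

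Your route instead bounds $\mathbb{E}[B]=O(s^4/n)=o(s)$ globally, uses Markov to get $B=o(k)$ with high probability, and then invokes the standard facts that a graph on $k$ vertices with minimum degree $\ge k/2$ is connected and that $o(k^2)$ missing triples out of $\Theta(k^3)$ still leaves a triangle. This is a genuinely different decomposition: you trade a first-moment computation on gap collisions (the step you correctly flag as the only delicate one) plus two short graph-theoretic facts for the paper's bare union bound over a hand-picked edge set. Your version yields more structural information (high minimum degree, diameter $\le 2$, an abundance of triangles), while the paper's version is shorter and avoids any graph theory beyond the definitions. Both arrive at the same threshold $s=o(n^{1/3})$ for the same reason, namely that the relevant union bound or expectation is governed by $s^3/n$.
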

Pick any three vertices randomly. Choose any path of length $k-3$ from one of those vertices to cover all the remaining vertices randomly. If all the edges exists between the three vertices and the chosen $k-3$ length path exists, we are through. If any of the $k$ edges doesn't exist, it implies that the distance between that pair of vertices occurs more than once. Since there are less than $k^2$ pairs,  the probability of a pair of vertices not having an edge can be union bounded by $\frac{k^2}{n}$. Since there are $k$ edges to be considered, the  probability of  failure can be upper bounded by $\frac{k^3}{n}$. Hence if $s=o(n^{1/3})$, any chosen triangle and path exists with very high probability. 

\subsection{Algorithm 2}

Algorithm 2 is developed using a convex optimization based framework. Semidefinite relaxation is used to convert the non-convex constraints into a set of convex constraints. We break the problem into two stages. First, the support of the signal is recovered from the autocorrelation and then we solve for the signal in the support. 

\subsubsection{Support Recovery}

We have to extract  $\U$ from the autocorrelation of the signal. We will assume that the support of the signal is a subset of the support of the autocorrelation. This is the same as assuming there is no cancellation of support in the autocorrelation, which is a very weak requirement and holds with probability one if the coefficients of the signal are chosen randomly from a non-degenerate distribution. With this assumption, $a_i=0$ implies that no two elements in the support are separated by a distance $i$, and if $a_i$ is non-zero, there is atleast one pair of elements in the support separated by a distance $i$, i.e.,

\eq
a_i=0 \Rightarrow u_ju_{i+j}=0 \ \forall \ j
\eqend
\eq
a_i \neq 0 \Rightarrow u_ju_{i+j} \neq 0 \ \textrm{for some} \ j
\eqend
where $\U$ is the binary support vector. This is clearly non-convex as the  constraints are non-convex and $\U$ is binary. Define $\mathbf{S}=\U\U^T$, which is allowed to be positive semidefinite, as it is the smallest convex set containing all rank one matrices. The entries of $\mathbf{S}$ are allowed to be in $[0,1]$, which is the best convex relaxation for binary variables. The trace of $\mathbf{S}$ is given by $\sum_i{u_i^2}=\sum_iu_i=k$, the sparsity of the signal. Also, note that $\sum_iS_{ij}=\sum_iu_iu_j=u_j\sum_iu_i=ku_j=ku_j^2=kS_{jj}$ and similarly $\sum_jS_{ij}=kS_{ii}$. Since flipped version of the support also satisfies all the constraints, a random matrix  $\mathbf{V}$ is used to bias the cost. The support estimation problem becomes
\begin{align}
\nonumber &\textrm{minimize}
& &{trace }\mathbf{(VS)} \\
\nonumber & \textrm{subject to   } 
\nonumber & & {trace } (\mathbf{S})=k \quad \quad \mathbf{S} \succcurlyeq 0 \\
\nonumber & & &\sum_{i}S_{ij}=kS_{jj} \quad \sum_{j}S_{ij}=kS_{ii}\\
\nonumber & & &\sum_{i}S_{i,i+k}>0 \textrm{ iff } a_k \neq 0 \\
\nonumber & & & 0 \leq S_{ij} \leq 1 \quad  0 \leq i,j \leq m-1 \\
\end{align}

Note that we assume apriori knowledge of the sparsity of the signal, i.e., the number of non-zero locations of the signal is known.

\subsubsection{Signal Recovery}

Note that the autocorrelation constraints are non-convex. As we did in the support extraction, we use the semidefinite relaxation $\mathbf{X}=\x\x^T$. We append $n$ zeros to the signal so that cyclic indexing scheme can be applied, hence a $m=2n$ order DFT matrix is required. Suppose $\mathbf{M_k}$ is the $m\times m$ matrix defined by $\mathbf{M}_k=\mathbf{f}_k\mathbf{f}_k^T$, where $\mathbf{f}_k$ is the $k^{th}$ column of the DFT matrix for $k=\{0,1,....m-1\}$. The autocorrelation constraints can be written in the Fourier domain as
\eq
{\mathbf{Y}_k}={trace}({\bf{M_kX}})   \quad \quad  k=0,......,m-1
\eqend 
where $\mathbf{Y}=\{|y_0|^2,|y_1|^2,......|y_{m-1}|^2\}$ is the vector containing the squared magnitude of the Fourier transform of $\x$. We can solve for the signal using L1-minimization \cite{candes3, candes4,candes5}.
\begin{align}
\nonumber &\textrm{minimize}
& & ||\mathbf{X}||_1 \\
& \textrm{subject to   } &  & {\bf{Y_k}}={trace }({\bf{M_kX}}),    \quad  k=0,......m-1\\
\nonumber & & & X_{ij}=0 \ \ \ \textrm{if} \ \ \  S_{ij} = 0  \quad  0 \leq i,j \leq m-1 \\
\nonumber & & & \X \succcurlyeq 0
\end{align}

\section{Simulation Results}

\begin{figure}[h]
\begin{center}
\includegraphics[scale=0.65]{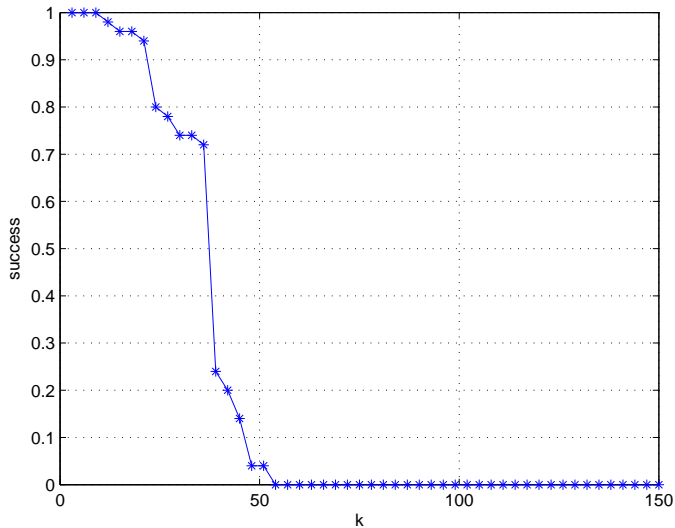}  
\caption{Success rate of recovery using Algorithm 1 for $n=8192$ $(n^{1/3}\approx 20)$ for various sparsities}
\end{center}
\end{figure}
\begin{figure}[h]
\begin{center}
\includegraphics[scale=0.65]{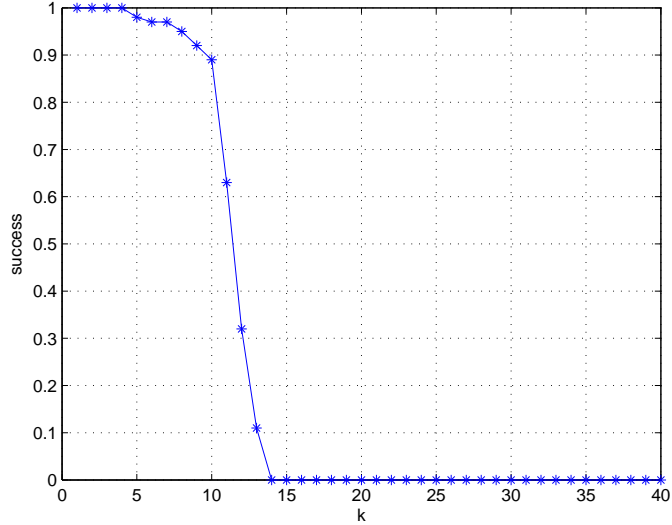}  
\end{center}
\caption{Success rate of recovery using Algorithm 2 for $n=64$ $(n^{1/2}=8)$ for various sparsities}
\label{fig1}
\end{figure}

Figure 1 shows the success rate of signal recovery using Algorithm 1 as a function of the sparsity of the signal. We see that signals with $s=o(n^{1/3})$ are recovered successfully with very high probability. While the algorithm is computationally very cheap, it is not robust to noise due to error propagation.

Figure 2 demonstrates the performance of Algorithm 2 as a function of the sparsity of the signal. Numerical simulations strongly suggest that signals with sparsity upto $s=o(n^{1/2})$ can be recovered using this algorithm. It is also very robust to noise and hence more practical. We hope to provide theoretical guarantees in a future publication.

\section{Appendix}
\begin{lem}
\label{geom}
For a pair of geometric random variables $X_1$ and $X_2$ with parameter $\frac{s}{n}$ each, $P(X_1-pX_2=c) \leq \frac{s}{n}$ if $s=n^\alpha, \alpha<1$ for $n>n(\eps)$, where $p$ and $c$ are integers.
\end{lem}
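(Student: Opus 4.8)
The plan is to analyze the distribution of $X_1 - pX_2$ directly by conditioning on $X_2$. First I would write
$$P(X_1 - pX_2 = c) = \sum_{m \geq 0} P(X_2 = m)\, P(X_1 = c + pm),$$
using independence of $X_1$ and $X_2$. For a geometric random variable $X$ with parameter $q = \frac{s}{n}$ (supported on $\{0,1,2,\dots\}$, say), we have $P(X = \ell) = q(1-q)^\ell$ for $\ell \geq 0$ and $P(X = \ell) = 0$ for $\ell < 0$. The key observation is that each individual point mass satisfies $P(X = \ell) \leq q = \frac{s}{n}$, so in particular $P(X_1 = c + pm) \leq \frac{s}{n}$ whenever the argument is a valid (nonnegative) value, and is $0$ otherwise.

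From here the bound is almost immediate. I would split into cases according to the sign of $p$. If $p \geq 1$, then as $m$ ranges over $\{0,1,2,\dots\}$ the values $c + pm$ are distinct, so at most... well, actually we can be even cruder: fix the smallest point-mass bound. Since $P(X_1 = c+pm) \leq \frac{s}{n}$ for every $m$ and $\sum_m P(X_2 = m) = 1$, the naive bound gives
$$P(X_1 - pX_2 = c) = \sum_{m \geq 0} P(X_2 = m)\,P(X_1 = c+pm) \leq \frac{s}{n}\sum_{m\geq 0} P(X_2 = m) = \frac{s}{n}.$$
The case $p \leq 0$ (so $-p \geq 0$) is handled identically, and the case $p = 0$ reduces to $P(X_1 = c) \leq \frac{s}{n}$, which is the same point-mass bound. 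This is why the hypothesis $\alpha < 1$ matters only through making $\frac{s}{n} = n^{\alpha - 1} \to 0$: the statement "$\leq \frac{s}{n}$" holds for any $s \leq n$, and the "$n > n(\eps)$" phrasing is just to emphasize that the bound is eventually small. So strictly speaking the inequality $P(X_1 - pX_2 = c) \leq \frac{s}{n}$ is unconditional; the role of $\alpha<1$ is to guarantee it tends to zero.

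The only genuine subtlety — and the step I would be most careful about — is handling the convention for the geometric distribution and the edge cases in $p$: if $p$ could be zero the sum collapses to a single term, and if the support convention starts at $1$ rather than $0$ (so $P(X=\ell) = q(1-q)^{\ell-1}$) one still has each atom bounded by $q$, but one must re-check that shifting by $c + pm$ lands in the support set, which only removes mass and hence preserves the inequality. There is essentially no hard part: the whole argument rests on the single elementary fact that every point mass of a geometric$(\frac{s}{n})$ variable is at most $\frac{s}{n}$, combined with conditioning on the independent variable $X_2$. I would present it in three lines: state the atom bound, condition on $X_2$, and apply $\sum_m P(X_2=m)=1$.
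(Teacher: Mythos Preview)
Your argument is correct and in fact cleaner than the paper's. Both proofs begin identically, conditioning on $X_2$ to write $P(X_1-pX_2=c)=\sum_{m\geq 0}P(X_2=m)\,P(X_1=c+pm)$. The paper then substitutes the explicit geometric pmf, sums the resulting series in closed form to obtain $(s/n)^2(1-s/n)^c\big/\big(1-(1-s/n)^{1+p}\big)$, and finishes with a first-order expansion of the denominator, which is where the asymptotic qualifier ``for $n>n(\eps)$'' enters. You instead bypass the series entirely by invoking the uniform atom bound $P(X_1=\ell)\leq s/n$ and then summing out $X_2$. Your route is more elementary, handles all signs of $p$ and $c$ without case analysis on the support, and as you note yields the inequality non-asymptotically; the paper's explicit computation, on the other hand, gives a sharper constant (roughly $\tfrac{1}{1+p}\cdot\tfrac{s}{n}$) at the cost of the extra algebra and the large-$n$ caveat.
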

\subsection{Proof of Lemma \ref{geom}}
\eq
P(X_1-pX_2=c)=\sum_{i=0}^{\infty}P(X_2=i)P(X_1=pi+c) 
\eqend
\eqs
= \sum_{i=0}^{\infty} (1-\frac{s}{n})^{i}\frac{s}{n}(1-\frac{s}{n})^{c+pi}\frac{s}{n}= (\frac{s}{n})^2(1-\frac{s}{n})^{c}\sum_{i=0}^{\infty}(1-\frac{s}{n})^{(1+p)i}
\eqsend
\eqs
= (\frac{s}{n})^2 \frac{(1-\frac{s}{n})^c}{1-(1-\frac{s}{n})^{(1+p)}} =  (\frac{s}{n})^2 \frac{1}{(1+p)\frac{s}{n}+\frac{s}{n}o(1)} \leq  \frac{s}{n}
\eqsend
for $n>n(\eps)$.

\subsection{Proof of Corollary \ref{lem41}}
From Lemma \ref{geom}, we see that
\eqs
P(X_1-pX_2=qX_3) =\sum_{i=0}^{\infty}P(X_3=i)P(X_1-pX_2=qi)
\eqsend
\eq
\leq \sum_{i=0}^{\infty}P(X_3=i)\frac{s}{n} \leq \frac{s}{n}\sum_{i=0}^{\infty}P(X_3=i) \leq \frac{s}{n}
\eqend

\end{document}